\renewcommand\bibsection%
\newcommand{\todo}[1]{\textcolor{magenta}{#1}}
\DeclareMathOperator*{\argmax}{argmax}
\definecolor{gree}{rgb}{0, 0.6, 0}
\newcommand{\goods}{M} 
\newcommand{\alloc}{\boldsymbol{B}}
\newcommand{\MMS}{\mathsf{MMS}} 
\begin{document}
\title{Maximin Fair Allocation of Indivisible Items under Cost Utilities}
%
%
\author{Sirin Botan\inst{1} \and
Angus Ritossa\inst{1} \and
Mashbat Suzuki\inst{1} \and
Toby Walsh\inst{1}}
\authorrunning{S. Botan et al.}
%
\institute{UNSW Sydney\\ 
\email{\{s.botan, mashbat.suzuki, t.walsh\}@unsw.edu.au, a.ritossa@student.unsw.edu.au}}
\maketitle              
\begin{abstract}
	
We study the problem of fairly allocating indivisible goods among a set of agents. Our focus is on the existence of allocations that give each agent their \emph{maximin fair share}---the value they are guaranteed if they divide the goods into as many bundles as there are agents, and receive their lowest valued bundle. An \emph{MMS allocation} is one where every agent receives at least their maximin fair share. We examine the existence of such allocations when agents have \emph{cost utilities}. In this setting, each item has an associated \emph{cost}, and an agent’s valuation for an item is the cost of the item if it is useful to them, and zero otherwise. 

 Our main results indicate that cost utilities are a promising restriction for achieving MMS. We show that for the case of three agents with cost utilities, an MMS allocation always exists. We also show that when preferences are restricted slightly further---to what we call \emph{laminar set approvals}---we can guarantee MMS allocations for any number of agents. Finally, we explore if it is possible to guarantee each agent their maximin fair share while using a strategyproof mechanism. 

\keywords{Fair Division  \and Maximin Fair Share \and  Resource Allocation}
\end{abstract}

\section{Introduction}

How to fairly divide a set of indivisible resources is a problem that has been studied by computer scientists, economists, and mathematicians \citep{M04,BCELP16,BT96}. Because of the fundamental nature of the problem, there is a large number of applications ranging from course allocations \citep{ASB10}, to division of assets \citep{GP15}, and air traffic management \citep{V02}. 

Among the fairness notions studied, two of the most commonly studied are those of envy-freeness---how to ensure no agent envies another, and maximin fair share---our focus in this paper. The notion of the maximin fair share  was introduced by \citet{B11}, and generalises the well known cut-and-choose protocol. Conceptually, an agent's \emph{maximin fair share} is the value they can achieve by partitioning the items into as many bundles as there are agents, and receiving their least preferred bundle. The ideal outcome is of course an \emph{MMS allocation}, where every agent receives at least their maximin fair share.

There has been a significant amount of work on MMS in the general additive valuations setting. \linebreak  Unfortunately, results are often quite negative. In general, MMS allocations cannot be guaranteed to exist, even in the case of three agents \citep{KPW18,FST21}. Furthermore, for instances where MMS allocations do exist (for example, when agents have identical valuations), computing an MMS allocation is NP-hard. As a result, a large body of work has been focused on establishing the existence of MMS allocations in more restricted settings \citep{AMNS17,BL16,bouveret2017fair,EPS22}. In this paper, we study the problem under a natural class of valuation functions--what we call \emph{cost utilities}---that allow us to provide fairness guarantees that are not achievable for general additive valuations. Cost utilities describe the setting where each item has an associated cost. An agent's value for any item is the cost of the item if it is useful to them, and zero otherwise. Our focus in this work is on the existence of MMS allocations under cost utilities. 

We are not the first to study this restriction in the context of fair division. \citet{BS06} provided an approximation of egalitarian welfare maximisation under cost utilities, that was then improved upon by \citet{AFS12} and \citet{CM18}. 
\citet{CFPT22} and \citet{ARS22} focus on envy-freeness, and show that an EFX allocation always exists under cost utilities\footnote{\citet{BS06} call them \emph{restricted assignment valuations}, while \citet{CFPT22} call them \emph{generalised binary valuations}.  \citet{ARS22} study them under the name \emph{restricted additive valuations}. We use the term ``cost utilities'' as we find it conceptually the most appealing and descriptive.}. There are clear practical advantages to studying this particular class of valuations. In many real-life settings, the price of items are known, and elicitation of preferences boils down to asking an agent whether they want the item 
or not---a task that can be accomplished easily.

\paragraph{Related Work.}

Given that MMS allocations cannot be guaranteed for general additive valuations, the work done on MMS in fair division has focused on two main approaches to circumvent this impossibility. The first---which is the route we employ in this paper---is to consider a restriction on the valuations of the agents. Examples of such restrictions under which MMS allocations always exist include binary valuations \citep{BL16}, and ternary valuations \citep{AMNS17}---where item values belong to $\{0,1,2\}$,  and Borda utilities \cite{HNNR18}. Existence of MMS allocations also holds for personalised bivalued valuations---where for each agent $i$, the value of an item belongs to $\{1,p_i\}$ for $p_i\in \mathbb{N}$, and weakly lexicographical valuations---where each agent values each good more than the combined value of all items that are strictly less preferred \citep{EPS22}.  





The second approach is to examine \emph{how close} we can get to MMS, meaning how far each agent is from receiving their maximin fair share. 
An allocation is said to be $\rho$-MMS, if each agent receives a $\rho$ fraction of their MMS value. 
\citet{GT20} show that for instances with more than five agents a $\left(\frac{3}{4}+\frac{1}{12n}\right)$-MMS allocation always exists. On the more negative side, \cite{FST21} show that there exist instances such that no allocation is $39/40$-MMS. For valuations that are beyond additive, the picture is arguably gloomier. \cite{GHSSY18} show an existence of $\frac{1}{3}$-MMS allocations and a PTAS for computing such allocations. They also show that for submodular valuations, there exist instances that do not admit any $\frac{3}{4}$-MMS allocation.   

There have been several works focused on achieving both fairness along with strategyproofness. \citet{ABCM17} show that when there are two agents and $m$ items there is no truthful mechanism that outputs an $\frac{1}{\lfloor m/2 \rfloor}$-MMS allocation. On the positive side \citet{HPPS20} and \citet{BEF21} show that when agents have binary valuations there is a polynomial time computable mechanism that is strategyproof and outputs an MMS allocation along with several other desirable properties.

%

\paragraph{Our Contribution.}

We know that for some restricted settings---bivalued and ternary valuations---MMS allocations can always be found. \citet{ABFV22} highlight an open problem regarding the existence of other classes of structured valuations for which an MMS allocation is guaranteed to exist. Our paper answers this in the affirmative for a new class of valuation functions. We first show that MMS allocations exist for three agents under cost utilities, in contrast to the case of general additive utilities.  We also show that when valuations are restricted slightly further to laminar set approvals, MMS allocations are guaranteed to exist for any number of agents. Additionally, for the case of $n$ agents and $n+2$ items, we show there is a strategyproof polynomial time algorithm for computing Pareto optimal MMS allocations.

Interestingly, to the best of our knowledge, our results on cost utilities are first of its kind for which (other than identical valuations) the computation of the maximin fair share value is NP-hard, while existence of MMS allocation is still guaranteed. For previously known classes where an MMS allocation is guaranteed, the computation of the maximin fair share value can be done in polynomial time.

%



\paragraph{Paper Outline.} 
In Section~\ref{sec:prelims} we introduce the framework of fair division of indivisible items, and present the central preference and fairness notions of the paper. Section~\ref{sec:mms} is focused on when we can achieve MMS allocations for cost utilities. Section~\ref{sec:strat} looks at a strategyproof mechanism for finding MMS allocations. Section~\ref{sec:concl} concludes.

\section{Preliminaries}\label{sec:prelims} 
Let $N$ be a set of $n$ \emph{agents}, and $\goods$ a set of $m$ indivisible \emph{goods} (or \emph{items}). Our goal is to divide $\goods$ among the agents in $N$ according to their preferences over the items. 

\paragraph{Preferences.} Each agent~$i \in N$ has a \emph{valuation function}~$v_i: 2^\goods \to \mathbb{R}_{\geq 0}$ that determines how much they value any bundle of items. For all agents~$i$, we assume that $v_i$ is additive, so $v_i(S) = \sum_{g \in S} v_i(g)$. For singleton bundles, we write $v_i(g)$ in place of $v_i(\{g\})$ for simplicity. 
We write $\boldsymbol{v} = (v_1, \dots, v_n)$ to denote the vector of all valuation functions for agents in $N$. 

Our focus in this paper is on a restricted domain of preferences---cost utilities. For these preferences, it is easy to think of each agent as submitting an approval set. Let $A_i$ be the \emph{approval set} of agent~$i$. More formally, we say $A_i = \{g \in \goods \mid v_i(g) > 0\}$. We say agents have \emph{cost utilities} if there exists a \emph{cost function} $c$ such that $v_i(S) = c(S \cap A_i)$ for all $S \subseteq \goods$ and all agents~$i \in N$. We require that the cost function is additive, as well as non-negative. 

\paragraph{Allocations and Mechanism.} An \emph{allocation}~$\alloc = (B_1, \dots, B_n)$ is an $n$-partition of the set of items $\goods$, where $B_i \subseteq \goods$  is the \emph{bundle} assigned to agent~$i$ under the allocation $\alloc$. We write ${\alloc}|_{N'}$ to denote the restriction of the allocation~$\alloc$ to only the bundles assigned to agents in $N' \subseteq N$. For a set of goods~$\goods$, we write $\mathcal{B}_n(\goods)$ to mean all possible allocations of the goods in $\goods$ to $n$ agents. 
An \emph{instance}~$\mathcal{I} = (N,\goods,\boldsymbol{v})$ of a fair division problem is defined by a set of agents, a set of goods, and the agents' valuations over those goods.

%
%
 Given an instance~$\mathcal{I}$, our goal is to find an allocation $\alloc$  that satisfies certain normative properties. An \emph{allocation mechanism} for $n$ agents and $m$ items is a function $f: \boldsymbol{V}_n \to \mathcal{B}_n(\goods),$
where $\boldsymbol{V}_n$ is the set of possible valuation profiles---i.e. vectors of $n$ valuation functions. 

\paragraph{Fairness and Efficiency.} For an agent~$i \in N$, their \emph{maximin fair share} in an instance~$\mathcal{I}= (N,\goods,\boldsymbol{v})$ is defined as 

\[
\MMS^n_i(\mathcal{I}) = \max_{\alloc \in \mathcal{B}_n(\goods)} \min_{j \in N} v_i(B_j).
\] 

\noindent We sometimes write $\MMS^n_i(\goods)$ when the instance is clear from context. When the set of goods and the value of $n$ is fixed, we will also sometimes write $\MMS_i$. 

An \emph{MMS allocation}~$\alloc \in \mathcal{B}_n(\goods)$ is an allocation such that $v_i(B_i) \geq \MMS_i$ for all agents~$i \in N$. 

We say an allocation~$\alloc \in \mathcal{B}_n(\goods)$ is \emph{Pareto efficient} if there is no allocation~$\alloc' \in  \mathcal{B}_n(\goods)$ such that $v_i(B'_i) \geq v_i(B_i)$ for all $i \in N$ and $v_{i^*}(B'_{i^*}) > v_{i^*}(B_{i^*})$ for some ${i^*} \in N$. 

\section{Maximin Fair Share Guarantees}\label{sec:mms} 

In this section, we will look at two settings where cost utilities can aid in finding cases where MMS allocations can be guaranteed to exist. Section~\ref{sec:three} focuses on cases with only three agents. Section~\ref{sec:laminar} considers any number of agents but is limited to laminar approval sets. This is a restriction that captures the idea of items belonging to different categories.

\subsection{MMS Allocations for Three Agents}\label{sec:three}
For the case of three agents, restricting our scope to considering only cost utilities yields positive results. As we have seen in the introduction, this is not the case for the more general case of additive preferences. Theorem~\ref{thm:3agents} is therefore a very welcome result. 

In this section, we will sometimes speak about items approved exclusively by two agents. We denote by $A_{ij} = (A_i \cap A_j) \setminus A_{i^*}$---where $i^* \in N$ and $i^* \neq i,j$---the set of items approved by agents~$i$ and $j$, and no third agent.

Before we state our main result in this section, we present the following two lemmas that we need in order to prove Theorem~\ref{thm:3agents}. Our first lemma simply tells us that adding items approved only by a single agent does not affect the existence of an MMS allocation. 

\begin{lemma}\label{lem:disjoint}
If an MMS allocation exists for instance~$\mathcal{I} = (N, M, \boldsymbol{v})$, then an MMS allocation also exists for the instance $\mathcal{I'} = (N, M \cup S, \boldsymbol{v})$, where $S$ is a set of items approved by a single agent~$i \in N$, and $S\cap M = \emptyset$. 

\end{lemma}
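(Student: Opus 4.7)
The plan is to take the MMS allocation $\alloc = (B_1, \dots, B_n)$ on $\mathcal{I}$ and construct $\alloc' = (B_1', \dots, B_n')$ on $\mathcal{I}'$ by dumping all the new items into agent~$i$'s bundle: $B_i' = B_i \cup S$ and $B_j' = B_j$ for $j \neq i$. Then I verify agent by agent that $v_j(B_j') \geq \MMS_j^n(\mathcal{I}')$.

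For the easy direction, every agent~$j \neq i$ has value $0$ for items in $S$ (since $S \cap A_j = \emptyset$). Hence adding or removing items of $S$ from any bundle is invisible to $j$, so $\MMS_j^n(\mathcal{I}') = \MMS_j^n(\mathcal{I})$, and $v_j(B_j') = v_j(B_j) \geq \MMS_j^n(\mathcal{I}) = \MMS_j^n(\mathcal{I}')$.

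The one step that needs a short argument is agent $i$: we have $v_i(B_i') = v_i(B_i) + v_i(S)$, and I want this to dominate $\MMS_i^n(\mathcal{I}')$. The key lemma I would prove is
\[
\MMS_i^n(\mathcal{I}') \;\leq\; \MMS_i^n(\mathcal{I}) + v_i(S).
\]
To see this, take any partition $(C_1,\dots,C_n)$ of $M \cup S$ and let $k^\star$ minimise $v_i(C_k \cap M)$. Then
\[
\min_j v_i(C_j) \;\leq\; v_i(C_{k^\star}) \;=\; v_i(C_{k^\star} \cap M) + v_i(C_{k^\star} \cap S) \;\leq\; \MMS_i^n(\mathcal{I}) + v_i(S),
\]
using that $(C_1 \cap M,\dots,C_n\cap M)$ is an $n$-partition of $M$. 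Taking the max over partitions gives the bound. Combined with $v_i(B_i) \geq \MMS_i^n(\mathcal{I})$, this yields $v_i(B_i') \geq \MMS_i^n(\mathcal{I}')$.

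The only step that requires any care is that last bound, because the naive attempt --- showing $\MMS_i^n(\mathcal{I}') = \MMS_i^n(\mathcal{I}) + v_i(S)$ --- is false in general (agent $i$ may not be able to concentrate all of $S$ into their worst bundle), so I specifically need the one-sided inequality above. Everything else is bookkeeping and relies only on additivity of $v_i$ and the fact that $S$ is disjoint from $M$ and unwanted by every other agent.
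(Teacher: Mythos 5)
Your proof is correct and follows essentially the same route as the paper: the same construction $B_i' = B_i \cup S$, $B_j' = B_j$, and the same key inequality $\MMS_i^n(M \cup S) \leq \MMS_i^n(M) + v_i(S)$, obtained by restricting a partition of $M \cup S$ to $M$. The only difference is that you prove this inequality directly (via the bundle minimising $v_i(C_k \cap M)$) whereas the paper phrases the identical argument as a proof by contradiction.
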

\begin{proof}
Suppose we have an instance~$\mathcal{I} = (N, M, \boldsymbol{v})$ where $\alloc$ is an MMS allocation. Suppose further that $\mathcal{I'} = (N, M \cup S, \boldsymbol{v})$ is an instance where $S$ is a set of items approved by a single agent~$i \in N$, and $S\cap M = \emptyset$. We show that $\alloc'$ where $B'_j = B_j$ for all $j \neq i$ and $B'_i = B_i \cup S$ is an MMS allocation. Since for any $j\neq i$ we have  $v_j(B_j)\geq \MMS_j^n$, we only need to show that agent $i$ gets her MMS fair share.

Suppose for contradiction that we have $v_i(S) +\MMS^n_i (M) < \MMS^n_i (M \cup S)$. Let $\boldsymbol{W} = (W_1, \dots, W_n)$ be an $n$-partition of $(M \cup S)$ such that $v_i(W_k) \geq \MMS^n_i (M \cup S)$ for $1\leq k \leq n$. Note that for any $W_k$ in the partition we have that $W_k=(W_k\cap M)\cup (W_k\cap S)$. Thus we have the following: 
\begin{align*}
 \MMS^n_i (M\cup S)  & \leq v_i(W_k)  \\ 
& = v_i(W_k\cap M)+ v_i(W_k\cap S)  \\ 
& \leq v_i(W_k\cap M) + v_i(S) \\ 
& < v_i(W_k\cap M)+ \MMS^n_i (M\cup S) -\MMS^n_i(M)
\end{align*}
Where the last inequality follows from our assumption that $v_i(S) < \MMS^n_i (M \cup S) - \MMS^n_i (M)$. It follows that $v_i(W_k\cap M)> \MMS^n_i (M)$. As $k$ was chosen arbitrarily, this implies existence of a partition of $M$ into $n$ sets $(W_k\cap M)_{k\in [n]}$  such that each set has value strictly larger than $\MMS^n_i (M)$, a contradiction.\qed 
\end{proof}

Our second lemma is a more technical one. In yet another simplification of notation, we write $\mu_{ij} = \MMS^2_i(A_{ij}) = \MMS^2_j(A_{ij})$ to mean the maximin fair share of agents~$i$ and $j$ when dividing exactly the goods only the two of them approve among themselves. 

\begin{lemma}\label{lem:distinct}
Let $N = \{1,2,3\}$, and let $\boldsymbol{S} = (S_1, S_2, S_3)$ be a $3$-partition of $A_1$ such that $v_1(S_r) \geq \MMS_1$ for all $r \in \{1,2,3\}$. Then there exist distinct $k, \ell \in \{1,2,3\}$ such that 
\begin{equation*}
\begin{split}
c(S_k \cap A_{12}) &\leq \mu_{12}, \text{ and} \\
c(S_\ell \cap A_{13}) & \leq \mu_{13}.
\end{split}
\end{equation*}
\end{lemma}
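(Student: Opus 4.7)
The plan is to argue by a simple pigeonhole-style count. Since $A_{12},A_{13}\subseteq A_1$, the triple $(S_1,S_2,S_3)$ induces a 3-partition of each of $A_{12}$ and $A_{13}$. Write $a_r = c(S_r \cap A_{12})$ and $b_r = c(S_r \cap A_{13})$ for brevity, so $a_1+a_2+a_3 = c(A_{12})$ and $b_1+b_2+b_3 = c(A_{13})$. Define $K = \{r : a_r \leq \mu_{12}\}$ and $L = \{r : b_r \leq \mu_{13}\}$. It suffices to show $|K|\geq 2$ and $|L|\geq 2$, since then picking any $\ell\in L$ forces $K\setminus\{\ell\}$ to be nonempty (in a universe of three indices), yielding a distinct pair $(k,\ell)$ with $k\in K$, $\ell\in L$.

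The key observation is that for any index $r$, the pair $(S_r\cap A_{12},\, A_{12}\setminus S_r)$ is a 2-partition of $A_{12}$, so by definition of $\mu_{12}$ as the maximin value over 2-partitions we must have $\min(a_r,\, c(A_{12})-a_r)\leq \mu_{12}$. Thus whenever $a_r > \mu_{12}$, the complementary sum $\sum_{r'\neq r} a_{r'} \leq \mu_{12}$. I would then argue that at most one index satisfies $a_r > \mu_{12}$: if distinct $r_1,r_2$ both did, then letting $r_3$ denote the remaining index we would have $a_{r_2}+a_{r_3} > \mu_{12} + 0 = \mu_{12}$, contradicting the complement bound applied at $r_1$. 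Hence $|K|\geq 2$, and the identical argument applied to $A_{13}$ gives $|L|\geq 2$, completing the proof.

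The main (minor) obstacle is just spotting the right 2-partition to pit against $\mu_{12}$; once one realises that $S_r\cap A_{12}$ against its complement in $A_{12}$ already witnesses the MMS bound, everything reduces to counting. I note that the hypothesis $v_1(S_r)\geq \MMS_1$ plays no role in this lemma as stated and presumably enters only when the lemma is invoked inside the proof of Theorem~\ref{thm:3agents}.
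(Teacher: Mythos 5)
Your proof is correct and follows essentially the same route as the paper: observe that at most one part of the induced $3$-partition of $A_{12}$ can exceed $\mu_{12}$ (else the offending part and its complement would form a $2$-partition of $A_{12}$ with both sides strictly above the maximin value), do the same for $A_{13}$, and finish by pigeonhole. Your version just makes the ``complement'' step slightly more explicit, and your observation that the hypothesis $v_1(S_r)\geq \MMS_1$ is not actually used here matches the paper, which also never invokes it in this lemma.
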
 

\begin{proof}

Note that, by the definition of maximin fair share, there cannot be two elements $k_1, k_2 \in \{1,2,3\}$ such that $c(S_{k_1} \cap A_{12}) > \mu_{12}$ and $c(S_{k_2} \cap A_{12}) > \mu_{12}$---this would imply that we  could divide $A_{12}$ into two bundles such that both agents~$1$ and $2$ are guaranteed strictly more than their maximin fair share. 

Therefore, there must exist at least two distinct $k, k' \in \{1,2,3\}$ such that both $c(S_k \cap A_{12}) \leq \mu_{12}$ and $c(S_{k'} \cap A_{12}) \leq \mu_{12}$. The same argument tells us there are distinct $\ell, \ell' \in \{1,2,3\}$ such that $c(S_\ell \cap A_{13}) \leq \mu_{13}$ and $c(S_{\ell'} \cap A_{13}) \leq \mu_{13}$. Applying a pigeonhole argument, we conclude there must be distinct $k, \ell \in \{1,2,3\}$ such that $c(S_k \cap A_{12}) \leq \mu_{12}$ and $c(S_\ell \cap A_{13}) \leq \mu_{13}$, as desired.\qed 
\end{proof} 

We are now ready to state the main result of this section. 

\begin{theorem}\label{thm:3agents}
For three agents with cost utilities, there always exists a Pareto efficient MMS allocation.
\end{theorem}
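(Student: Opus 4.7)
The plan is to apply Lemmas~\ref{lem:disjoint} and~\ref{lem:distinct}, then construct an MMS allocation by assigning one bundle of agent $1$'s MMS partition to agent $1$ and splitting the remaining items between agents $2$ and $3$. By repeated application of Lemma~\ref{lem:disjoint}, I may assume without loss of generality that every item is approved by at least two agents, so $\goods$ decomposes into $A_{12}, A_{13}, A_{23}$ together with $A_{123} := A_1 \cap A_2 \cap A_3$. Invoking Lemma~\ref{lem:distinct} on agent $1$'s valuation yields an MMS partition $(S_1, S_2, S_3)$ of $A_1$ where, after relabeling, $c(S_1 \cap A_{12}) \leq \mu_{12}$ and $c(S_2 \cap A_{13}) \leq \mu_{13}$.

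Given such a partition, I would assign one of the three bundles $S_r$ to agent $1$---each choice secures $v_1(B_1) \geq \MMS_1$---and split the residual $\goods \setminus S_r$ between agents $2$ and $3$ as a two-agent sub-problem. A natural split assigns the residual $A_{12}$-items to agent $2$, the residual $A_{13}$-items to agent $3$, and partitions the commonly-valued items in $((S_{r'} \cup S_{r''}) \cap A_{123}) \cup A_{23}$ between them so as to balance their deficits toward $\MMS_2$ and $\MMS_3$. Lemma~\ref{lem:distinct}, combined with the pigeonhole observation from its proof---that two of the three $S_r$'s have small $A_{12}$-intersection and two have small $A_{13}$-intersection---should ensure that for at least one choice of $r$ the residual retains enough sole-interest $A_{12}$-value for agent $2$ and enough $A_{13}$-value for agent $3$.

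The hard part will be verifying that this balancing split is always feasible: I need to show that the feasibility interval for agent $2$'s share of the commonly-valued items (bounded below by $\MMS_2$ minus agent $2$'s pre-assigned value, and above by the total common-item cost minus the analogous quantity for agent $3$) is non-empty. I expect this to follow from upper-bounding each $\MMS_i$ by the proportional share $v_i(\goods)/3$ and rearranging the resulting inequality using the bounds from Lemma~\ref{lem:distinct}, with a case split on the choice of $r$ to handle adversarial configurations (for example, when a single bundle is ``bad'' for both $A_{12}$ and $A_{13}$ simultaneously, in which case giving that bundle to agent $1$ keeps the residual intersections with both bilateral sets bounded). Finally, Pareto efficiency comes essentially for free: starting from the constructed MMS allocation, iteratively applying Pareto improvements---which weakly increase every agent's utility and hence preserve MMS---terminates at a Pareto efficient MMS allocation.
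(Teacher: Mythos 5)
Your setup matches the paper's: reduce via Lemma~\ref{lem:disjoint} to the case where every item is approved by at least two agents, take an MMS partition $(S_1,S_2,S_3)$ of $A_1$, and use Lemma~\ref{lem:distinct} to find two bundles with controlled intersections with $A_{12}$ and $A_{13}$. But the core of your argument has a genuine gap, in two respects. First, you never exploit the freedom to choose \emph{which} agent plays the role of ``agent 1.'' The paper's proof hinges on first ordering the agents so that $\MMS_1 + \mu_{23} \geq \MMS_2 + \mu_{13}$ and $\MMS_1 + \mu_{23} \geq \MMS_3 + \mu_{12}$; these two inequalities are exactly what closes the bound $v_2(B_2) \geq \MMS_1 - \mu_{13} + \mu_{23} \geq \MMS_2$ (and its analogue for agent 3). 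Without that normalization, the bounds you can extract from Lemma~\ref{lem:distinct} alone do not suffice: partitioning $A_1$ by agent~1's MMS value says nothing about how large $\MMS_2$ or $\MMS_3$ are relative to what agents 2 and 3 can salvage from the residual.

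Second, and more seriously, your ``balancing split'' of the commonly-valued pool $((S_{r'}\cup S_{r''})\cap A_{123}) \cup A_{23}$ is exactly the step you cannot discharge the way you propose. Showing that the two deficits sum to at most the total cost of the pool (via $\MMS_i \leq v_i(\goods)/3$ and rearranging) is a \emph{fractional} feasibility statement; the pool consists of indivisible items, so a non-empty feasibility interval for a real-valued share does not yield an integral partition meeting both lower bounds (consider a pool dominated by one expensive item with both deficits positive). The paper never needs such a balancing act: it splits only $A_{23}$, and does so into $T_1, T_2$ each of cost at least $\mu_{23}$ --- a partition that exists \emph{by definition} of the two-agent maximin share $\mu_{23}$, which already internalizes indivisibility --- while the $A_{123}$ portions of $S_\ell$ and $S_k$ are routed wholesale to agents 2 and 3 respectively. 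If you want to repair your argument, you should replace the deficit-balancing step with the $\mu_{23}$-based split and add the $q_i$-maximizing choice of the distinguished agent; as written, the plan does not go through. Your final observation that Pareto efficiency can be obtained from any MMS allocation by iterated Pareto improvements is correct and matches the paper's remark.
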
 

\begin{proof}
Given a set of agents $N=\{1,2,3\}$, let $\MMS_i = \MMS_i^3(\goods)$---the maximin fair share of agent~$i$ when dividing the items in $\goods$ among the three agents. We assume that for any item $g \in \goods$, we have that $g$ is approved by at least two agents. By Lemma~\ref{lem:disjoint}, we know the claim will also hold for the remaining cases where there are additional goods approved by a single agent. 

Finally, we define the following three values: 
\begin{equation*}
\begin{split}
& q_1 = \MMS_1 + \mu_{23} \\ 
& q_2 = \MMS_2 + \mu_{13} \\ 
& q_3 = \MMS_3 + \mu_{12}
\end{split} 
\end{equation*}

\noindent Without loss of generality, we assume that $q_1 \geq q_2$ and $q_1 \geq q_3$. We can rewrite this, and express it as follows: 
\begin{equation}\label{ineq:1}
\MMS_1 + \mu_{23} - \mu_{13} \geq \MMS_2
\end{equation}
\begin{equation}\label{ineq:2}
\MMS_1 + \mu_{23}  - \mu_{12} \geq \MMS_3
\end{equation}

\noindent Our method for finding an allocation that satisfies the maximin property and is Pareto efficient, takes as basis a partition of the goods where each bundle reaches the maximin fair share of agent~$1$. Let $\boldsymbol{S} = (S_1, S_2, S_3)$ be a $3$-partition of $A_1$ such that $v_1(S_r) \geq \MMS_1$ for all $r \in \{1,2,3\}$. 
Note that such a partition always exists by definition of $\MMS_1$. By Lemma~\ref{lem:distinct} we know there exist distinct $k, \ell \in \{1,2,3\}$ such that
\begin{equation}\label{ineq:3}
c(S_k \cap A_{12}) \leq \mu_{12}, 
\end{equation} 
\begin{equation}\label{ineq:4}
c(S_\ell \cap A_{13}) \leq \mu_{13}.
\end{equation} 

\noindent We can now describe the allocation~$\alloc$, which we claim is a Pareto efficient MMS allocation. 

We divide $A_{23}$ into two disjoint sets $T_1$ and $T_2$ such that $c(T_1) \geq \mu_{23}$ and $c(T_2) \geq \mu_{23}$. Note that such a partition exists by the definition of $\mu_{23}$. Let $S_x$ be the third bundle in $\boldsymbol{S}$---i.e. $x \in \{1,2,3\} \setminus \{k,\ell\}$. We then allocate the goods in $\goods$ as follows:

\begin{equation*}
\begin{split}
B_1 &= (S_\ell\setminus{A_2}) \cup (S_k\setminus{A_3}) \cup S_{x}\\
B_2 & = (S_\ell \cap A_2) \cup T_1\\
B_3 &= (S_k \cap A_3) \cup T_2
\end{split}
\end{equation*}

\noindent In words, agent~$2$ receives $T_1$ and everything in $S_\ell$ that she wants, agent~$3$ receives $T_2$ and everything in $S_k$ that she wants, and agent $1$ receives the remaining items in $S_k$ and $S_\ell$ as well as the entire bundle $S_x$. Note that all items have been allocated as $A_1 \cup A_{23} = M$, and no item is allocated to more than one agent as $S_x$ and $A_{23}$ are disjoint. By definition, we have that $v_1(B_1) \geq \MMS_1$---agent 1 clearly receives their maximin fair share as she receives one of the original bundles, $S_x$, and then some. We now show that the same must hold for the other two agents. 

For agent~$2$, we need to show that $v_2(B_2) \geq \MMS_2$. Note that we can express the value of agent~2's bundle using the cost function~$c$ as follows (where $S_\ell \cap A_{13}$ is the portion of $S_\ell$ that agent~$2$ values at $0$).\footnote{This is possible because we know that any good in the set is either approved by all three agents, or a subset of two. Agent 2 is a member of any subset of size two except $A_{13}$.}
\begin{align*}
v_2(B_2) 
&=  v_2(S_\ell \cap A_2) + v_2(T_1) \\
&= c(S_\ell \cap A_2) + c(T_1)\\
&= c(S_\ell) - c(S_\ell \cap A_{13}) + c(T_1) 
\end{align*}

\noindent Because of the way we've defined the partition~$\boldsymbol{S}$ and $A_{13}$, we know that $c(S_\ell) \geq \MMS_1$ and $c(T_1) \geq \mu_{23}$. Additionally, by Equation~\ref{ineq:4}, we know that $c(S_\ell \cap A_{13}) \leq \mu_{13}$. From this, we can conclude the following, where the last inequality follows from Equation~\ref{ineq:1}. 
\begin{align*}
v_2(B_2) 
&= c(S_\ell) - c(S_\ell \cap A_{13}) + c(T_1) \\
&\geq \MMS_1 - \mu_{13} + \mu_{23} \\
& \geq \MMS_2
\end{align*}
 
\noindent Putting this all together, we have shown that $v_2(B_2) \geq \MMS_2$, as desired. The proof for agent~$3$ proceeds analogously, using Equations~\ref{ineq:2} and~\ref{ineq:3}. Thus, we have shown that $\alloc$ is an MMS allocation.

Finally, we see that no item has been allocated to an agent who values it at $0$, meaning the allocation is indeed Pareto efficient.\qed   
\end{proof}

Theorem~\ref{thm:3agents} establishes a clear improvement when dealing with cost utilities over general additive valuations. 

\subsection{MMS Allocations for Laminar Set Approvals}\label{sec:laminar}

In this section we present our results for agents with laminar set approvals. This restriction on the agents' preferences has a very natural interpretation, in that it describes the notion of items falling into categories and subcategories quite well. 
We can think of agents as approving categories as a whole. For example, one agent might want all vegetarian dishes, while another wants only the seafood. A third agent might want the pasta-based vegetarian dishes, which would constitute a subcategory of vegetarian. 


We say agents with cost utilities have \emph{laminar set approvals} if for a vector~$\boldsymbol{A} = (A_1, \dots, A_n)$ of approval sets, we have that for any $i, j \in N$, either $A_i \cap A_j = A_j$,  $A_i \cap A_j = \emptyset$, or $A_i \cap A_j = A_i$. In words, for any two agents, one approval set is either a subset of the other, or the sets are disjoint. Note that in this paper, we only examine laminar set approvals within the context of cost utilities.

We first present a technical lemma that we will apply inductively in the proof of Theorem~\ref{thm:laminar}. Lemma~\ref{lem:lam-ind} allows us to carry the existence of an MMS allocation from cases where all agents submit the whole set~$M$ of goods as their approval, to cases where fewer and fewer agents do so, until we reach a single agent approving all goods. 

\begin{lemma}\label{lem:lam-ind}
For $n$ agents with cost utilities and laminar set approvals, and $k\geq 1$, if an MMS allocation exists for all instances where $k+1$ agents approve all items in $\goods$, then an MMS allocation exists for any instance where $k$ agents approve all items.
\end{lemma}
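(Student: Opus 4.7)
The plan is to reduce an instance where $k$ agents approve all of $\goods$ to one where $k+1$ do, apply the inductive hypothesis, and then repair the resulting allocation.

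I would start by selecting a non-universal agent $i$ whose approval set $A_i \subsetneq \goods$ is maximal in the laminar family. By laminarity, any $A_j$ strictly containing $A_i$ must equal $\goods$, so every item in $\goods \setminus A_i$ is approved only by universal agents and by agents whose approval sets are disjoint from $A_i$. Define $\mathcal{I}'$ to be the instance obtained from $\mathcal{I}$ by replacing $v_i$ with $v_i'(S) = c(S)$; this leaves the laminar profile intact and promotes agent $i$ to a universal agent, yielding $k+1$ universal agents overall. The inductive hypothesis then supplies an MMS allocation $\boldsymbol{B}$ for $\mathcal{I}'$.

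Next, I would convert $\boldsymbol{B}$ into an MMS allocation for the original instance $\mathcal{I}$. Each agent $j \neq i$ has an unchanged valuation and an unchanged MMS value, so $v_j(B_j) \geq \MMS_j$ carries over automatically. For agent $i$ the issue is that $\MMS_i^{\mathcal{I}'} \geq \MMS_i^{\mathcal{I}}$, since enlarging her approval set can only increase her MMS, while under her real valuation we need $c(B_i \cap A_i) \geq \MMS_i^{\mathcal{I}}$. I would repair $\boldsymbol{B}$ by exchanging items: swap each $g \in B_i \setminus A_i$ out of $B_i$ for an item $g' \in A_i$ currently held by another agent, preferring the universal agents as trading partners because they are indifferent to which specific items they receive apart from total cost. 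The maximality of $A_i$ confines these swaps to a small family of agent types---only those whose approval sets contain or are disjoint from $A_i$---which controls their effect on every other agent.

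The hard part will be verifying that the repair can always be carried out so that (i) agent $i$ ends up with $v_i(B_i) \geq \MMS_i^{\mathcal{I}}$ under her real valuation and (ii) every other agent retains their MMS share. The natural approach is a mass-balance argument: pair items in and out so that the total cost moved each way matches, which preserves $v_i'(B_i) \geq \MMS_i^{\mathcal{I}'}$ while relocating all of $B_i$ into $A_i$, combined with a pigeonhole-style check that enough $A_i$-mass is distributed among the universal agents' bundles to supply the required replacement items. Establishing this mass-balance and checking that the universal agents (and any disjoint-approval agents incidentally touched by the swaps) retain their MMS guarantees is the key technical step, and is where the cost-utility assumption does most of the work.
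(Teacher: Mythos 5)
Your setup matches the paper's: you pick a maximal non-universal approval set $A_i$, promote agent $i$ to a universal agent to obtain an instance $\mathcal{I}'$ with $k+1$ universal agents, invoke the hypothesis to get an MMS allocation for $\mathcal{I}'$, and then try to repair it for $\mathcal{I}$. The gap is that the repair --- which is the entire content of the lemma --- is not actually carried out, and the mechanism you propose for it is unlikely to go through. An item-level swap with ``mass balance'' runs into the indivisibility of the goods: there is no guarantee that you can exchange subsets of equal total cost between $B_i$ and a universal agent's bundle, and any deficit the trading partner absorbs can push them strictly below their maximin fair share, since an MMS allocation may give each universal agent exactly their MMS value with no slack. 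Worse, the $A_i$-mass sitting in the universal agents' bundles (the only partners you are willing to trade with) may simply be insufficient to bring $c(B_i \cap A_i)$ up to $\MMS_i^{\mathcal{I}}$, because most of $A_i$ can be held by agents $j$ with $A_j \subsetneq A_i$ who need those items for their own guarantees. So the ``pigeonhole-style check'' you defer to is not a routine verification; it is where the argument breaks.

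The paper resolves exactly this difficulty without any item-level swapping. It splits on whether some bundle $B'_{i^*}$ with $i^* \in N' \cup \{i\}$ already satisfies $v_i(B'_{i^*}) \geq \MMS_i(\mathcal{I})$. If so, a \emph{whole-bundle} swap between $i$ and $i^*$ suffices: $i^*$ is universal, so $v_{i^*}(B'_i) = c(B'_i) = v'_i(B'_i) \geq \MMS_i(\mathcal{I}') = \MMS_{i^*}(\mathcal{I})$, and no third party is touched. If not --- so every such bundle carries less than $\MMS_i(\mathcal{I})$ worth of $A_i$ --- the paper invokes the inductive hypothesis a \emph{second} time, on the sub-instance $\mathcal{I}^*$ restricted to the goods in $A_i$, where at least $k+1$ agents (the $k$ universal ones and $i$) approve everything. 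It then merges via $B_j = (B'_j \setminus A_i) \cup B''_j$ and verifies MMS for each agent using the laminar trichotomy $A_j \subseteq A_i$, $A_j \cap A_i = \emptyset$, or $A_i \subset A_j$; the assumption $v_i(B'_j) < \MMS_i(\mathcal{I})$ is precisely what makes the third case work. This second application of the hypothesis on the restricted instance is the idea your proposal is missing, and without it (or a genuinely new argument for your swap scheme) the proof is incomplete.
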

\begin{proof}
Consider an instance $\mathcal{I}=(N,\goods,\boldsymbol{v})$ where there are $k \geq1$ agents whose approval set equals $\goods$. We call this set of agents $N'$.  Let $i \in N \setminus N'$ be an agent such that $A_i \not\subset A_j$ for all $j \in N \setminus N'$ in the instance~$\mathcal{I}$. Note that such an agent must exist, as agents have laminar set approvals. See Figure~\ref{fig:laminarlemma} for a visual representation. We will continue to use this figure throughout this proof. 

\begin{figure}
	\begin{center}
	\scalebox{0.75}{
			\begin{tikzpicture}

\fill [fill=gray, fill opacity=0.25] (0,0) ellipse (2.7cm and 2cm); 
\fill [rotate = 0, fill=gray, fill opacity=0.25] (-1.1,0) ellipse (0.9cm and 1.3cm); 
\fill [rotate = 0, fill=gray, fill opacity=0.25] (1.1,0) ellipse (0.9cm and 1.3cm);
\fill [rotate = 40, fill=gray, fill opacity=0.25] (1,-0.3) ellipse (0.7cm and 0.4cm);
\fill [rotate = 40, fill=gray, fill opacity=0.25] (0.6,-1.2) ellipse (0.7cm and 0.4cm);
\fill [rotate = 90, fill=gray, fill opacity=0.25] (0,1.1) ellipse (0.6cm and 0.5cm); 

\fill [rotate = 70, fill=blue, fill opacity=0.25] (0.5,1.1) ellipse (0.8cm and 0.6cm); 

\node at (0, 1.4) {$B'_{i^*}$}; 

\node at (0, -2.4) {$\goods = A_1, \dots, \underline{A_{i^*}}, \dots, A_k, \underline{A'_i}$};
\node at (-1.1, -1.5) {$A_i$}; 

\end{tikzpicture}
}
\end{center}
\caption{An illustration of the sets involved in the proof of Lemma~\ref{lem:lam-ind}. The largest set is $\goods$---the set of goods. Note how the approval sets $A_1, \dots, A_k$ are equivalent to the whole set of goods, and the same holds for $A_{i^*}$ and $A'_i$. The approval set $A_i$ is ``one level below'' the sets approving all items. The bundle $B'_{i^*}$---represented in blue---is the bundle in the allocation~$\alloc|_{N' \cup \{i\}}$ that is highest valued according to $v_i$.}\label{fig:laminarlemma}
\end{figure}
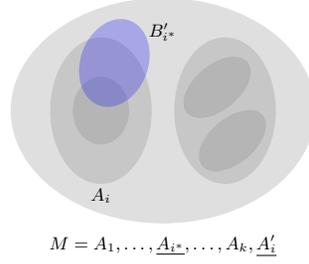
Our aim is to show that there exists an MMS allocation for the instance~$\mathcal{I}$. To this end, we define a second instance $\mathcal{I'} =(N,\goods,\boldsymbol{v'})$ such that $A'_i = \goods$, and $A'_j = A_j$ for all agents $j \neq i$---i.e. the instance~$\mathcal{I'}$ only differs from $\mathcal{I}$ in that agent~$i$ now approves all items. Thus, we have $k+1$ agents whose approval set is $\goods$ in the instance $\mathcal{I'}$. 

Suppose $\alloc'$ is an MMS allocation for $\mathcal{I'}$, such an allocation is guaranteed to exist by the assumption of the lemma. We construct an MMS allocation~$\alloc$ for our initial instance by building on $\alloc'$. We first define $i^* \in \argmax_{j \in N' \cup \{ i \}} v_i(B'_j)$. This is an agent who gets the highest value bundle in $\alloc|_{N' \cup \{i\}}$ according to $v_i$---agent~$i$'s valuation in the initial instance. Because the value $n$ is fixed, we will write $\MMS_i(\mathcal{I})$ to mean $\MMS^n_i(\mathcal{I})$. We consider two cases. 

\textit{Case 1:} Suppose $v_i(B'_{i^*}) \geq \MMS_i(\mathcal{I})$. Then agent~$i$ values agent~$i^*$'s bundle at least as much as their maximin fair share in the initial instance. We define an allocation $\alloc$ and claim that it is an MMS allocation for the instance $\mathcal{I}$. 
\[
B_j = \begin{cases} B'_{i^*}  \ \ \ \ &\text{if} \ \ \  j= i \\ 
B'_{i}  \ \ \ \ &\text{if} \ \ \  j = i^* \\ 
B'_{j} \ \ \ \ &\text{otherwise} 
\end{cases}
\]

\noindent First note that for any agent $j \not\in \{i, i^* \}$, their maximin fair share is the same across both instances, and they receive the same bundle under $\alloc$ and $\alloc'$. Thus, they receive at least their maximin fair share in the allocation $\boldsymbol{B}$. 

We now show the same holds for $i$ and $i^*$.  
For agent~$i$, 
this follows by assumption since $v_i(B_i) = v_i(B'_{i^*}) \geq \MMS_i(\mathcal{I})$. 
For agent~$i^*$ then, we only need to consider when $i^* \neq i$. In that case, as $i^* \in N'$, we have that 
$A_{i^*} =  A'_{i} = M$. Then agent $i^*$ must also receive their maximin fair share in the allocation $\boldsymbol{B}$, because $v_{i^*}(B_{i^*}) = v'_{i}(B'_{i}) \geq \MMS_i(\mathcal{I'}) = \MMS_{i^*}(\mathcal{I})$. Note that this holds because the agents have cost utilities, and both $v_{i^*}$ and $v'_i$ are equivalent to the cost function~$c$ since $A_{i^*} = A'_i = M$. As $\boldsymbol{B}$ guarantees everyone at least their maximin fair share, it is an MMS allocation for ~$\mathcal{I}$.

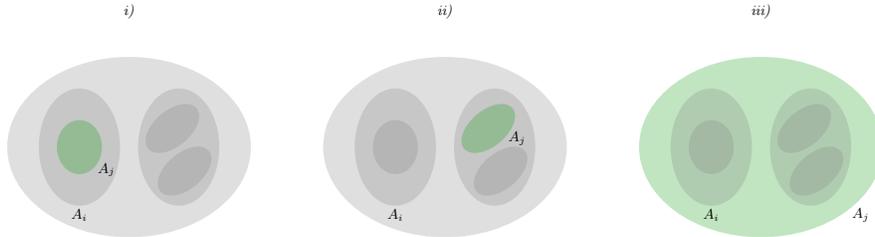
\begin{figure}
	\begin{center}
	\scalebox{0.6}{
	\begin{tikzpicture}
	
\node at (-7,0) {
			\begin{tikzpicture}

\fill [fill=gray, fill opacity=0.25] (0,0) ellipse (2.7cm and 2cm); 
\fill [rotate = 0, fill=gray, fill opacity=0.25] (-1.1,0) ellipse (0.9cm and 1.3cm); 
\fill [rotate = 0, fill=gray, fill opacity=0.25] (1.1,0) ellipse (0.9cm and 1.3cm);
\fill [rotate = 40, fill=gray, fill opacity=0.25] (1,-0.3) ellipse (0.7cm and 0.4cm);
\fill [rotate = 40, fill=gray, fill opacity=0.25] (0.6,-1.2) ellipse (0.7cm and 0.4cm);
\fill [rotate = 90, fill=gree, fill opacity=0.25] (0,1.1) ellipse (0.6cm and 0.5cm); 


\node at (-0.5, -0.5) {$A_j$}; 

\node at (-1.1, -1.5) {$A_i$}; 

\end{tikzpicture}};

\node at (0,0) {
			\begin{tikzpicture}

\fill [fill=gray, fill opacity=0.25] (0,0) ellipse (2.7cm and 2cm); 
\fill [rotate = 0, fill=gray, fill opacity=0.25] (-1.1,0) ellipse (0.9cm and 1.3cm); 
\fill [rotate = 0, fill=gray, fill opacity=0.25] (1.1,0) ellipse (0.9cm and 1.3cm);
\fill [rotate = 40, fill=gree, fill opacity=0.25] (1,-0.3) ellipse (0.7cm and 0.4cm);
\fill [rotate = 40, fill=gray, fill opacity=0.25] (0.6,-1.2) ellipse (0.7cm and 0.4cm);
\fill [rotate = 90, fill=gray, fill opacity=0.25] (0,1.1) ellipse (0.6cm and 0.5cm); 



\node at (1.6, 0.2) {$A_j$}; 

\node at (-1.1, -1.5) {$A_i$}; 

\end{tikzpicture}};

\node at (7,0) {
			\begin{tikzpicture}

\fill [fill=gree, fill opacity=0.25] (0,0) ellipse (2.7cm and 2cm); 
\fill [rotate = 0, fill=gray, fill opacity=0.25] (-1.1,0) ellipse (0.9cm and 1.3cm); 
\fill [rotate = 0, fill=gray, fill opacity=0.25] (1.1,0) ellipse (0.9cm and 1.3cm);
\fill [rotate = 40, fill=gray, fill opacity=0.25] (1,-0.3) ellipse (0.7cm and 0.4cm);
\fill [rotate = 40, fill=gray, fill opacity=0.25] (0.6,-1.2) ellipse (0.7cm and 0.4cm);
\fill [rotate = 90, fill=gray, fill opacity=0.25] (0,1.1) ellipse (0.6cm and 0.5cm); 



\node at (2.21, -1.5) {$A_j$};
\node at (-1.1, -1.5) {$A_i$}; 

\end{tikzpicture}};

\node at (-7,3) {\textit{i)}};
\node at (0,3) {\textit{ii)}};
\node at (7,3) {\textit{iii)}};
\end{tikzpicture}
}
\end{center}
\caption{An illustration of the sets involved in Case 2 of the proof of Lemma~\ref{lem:lam-ind}. The possible approval set of agent~$j$ in each case is represented in green.}\label{fig:laminarlemma2}
\end{figure}

\textit{Case 2:} Suppose instead that $v_i(B'_{i^*}) < \MMS_i(\mathcal{I})$. In this case, agent~$i$ values agent~$i^*$'s bundle strictly less than their maximin fair share in the initial instance. Recall that $v_i(B'_j) \leq v_i(B'_{i^*})$ for all $j \in N' \cup \{ i \}$---agent $i^*$'s bundle is still the ``best'' one among those in~$\alloc|_{N' \cup \{i\}}$. Given our initial assumption, we then have that
\begin{equation}\label{eq:lem3}
v_i(B'_j) < \MMS_i(\mathcal{I}) \text{ for all } j \in N' \cup \{ i \}.
\end{equation}

\noindent Before we proceed, we will need to define a third instance over only the goods in $A_i$. Let $\mathcal{I^*} = (N,A_i,\boldsymbol{v})$ be a restriction of the instance~$\mathcal{I}$ to only the items in $A_i$---meaning $A^*_j = A_j \cap A_i$ for all $j \in N$. Note that in $\mathcal{I^*}$, there are at least $k+1$ agents whose approval set is $A_i$---the initial $k$ agents who approved all items in~$\mathcal{I}$, and agent~$i$. Let $\alloc''$ be an MMS allocation for $\mathcal{I^*}$. We now proceed with defining an allocation $\alloc$ by using both allocations~$\alloc'$ and $\alloc''$. In particular, we define
$$B_j = (B'_j \setminus A_i) \cup B''_j \text { for all } j \in N.$$ 

\noindent Note that no item is allocated more than once because $B''_j \subseteq A_i$ for all $j \in N$. We claim that $\alloc$ is an MMS allocation for the instance $\mathcal{I}$. Because agents have laminar set approvals, there are three possible cases for any agent~$j$: either \textit{i)}  $A_j \subseteq A_i$, or \textit{ii)} $A_j \cap A_i = \emptyset$, or \textit{iii)} $A_i \subset A_j$. See Figure~\ref{fig:laminarlemma2} for a visual representation.

\begin{itemize}
    \item[\textit{i)}] Suppose $A_j \subseteq A_i$.  Then agent~$j$ was only approving items in $A_i$ and their approval set remains the same in the restriction~$\mathcal{I}^*$, implying that their maximin fair share also remains the same in both instances. Additionally, we have that $v_j(B'_j \setminus A_i) = 0$ given that $A_j \subseteq A_i$, and so $v_j(B_j) = v_j(B''_j)$. Since $j$ receives their maximin fair share in $\alloc''$, they also do so in $\alloc$.

    \item[\textit{ii)}] Suppose instead $A_j \cap A_i = \emptyset$.  Because agent~$j$ does not approve any items in $A_i$, we have that $v_j(B'_j) = v_j(B'_j \setminus A_i)$ and $v_j(B''_j) = 0$. 
Then $v_j(B_j) = v_j(B'_j)$, and because $A'_j = A_j$ their maximin fair share is the same in $\mathcal{I}$ and $\mathcal{I'}$. 
Thus $j$ receives their maximin fair share in $\alloc$. 
    
   \item[\textit{iii)}] Finally, suppose $A_i \subset A_j$. This is only possible if $j \in N'$, meaning $j$ is one of the agents approving all items. We know that 
   \vspace{3pt}
    \begin{equation}\label{eq:lam}
    \begin{split}
	v_j(B_j) & = v_j(B'_j \setminus A_i)  + v_j(B''_j) \\
	& = v_j(B'_j) - v_j(B'_j \cap A_i)  + v_j(B''_j) \\
	& =  v_j(B'_j)  - v_i(B'_j) + v_j(B''_j)
	\end{split}
    \end{equation} 
\noindent Where the last line follows from the fact that agents have cost utilities, meaning $v_j(B'_j \cap A_i) = v_i(B'_j)$.

Recall that Equation~\ref{eq:lem3} tells us $v_i(B'_j) < \MMS_i(\mathcal{I})$. This fact, combined with Equation~\ref{eq:lam} (and some reshuffling of the terms), tells us it must be the case that 
     \begin{equation}\label{eq:lam2} 
     v_j(B_j) > v_j(B'_j) - \MMS_i(\mathcal{I}) + v_j(B''_j).
     \end{equation}   
    
    Since $\alloc''$ is an MMS allocation for $\mathcal{I^*}$, it follows that $v_j(B''_j) \geq \MMS_j(\mathcal{I^*})$. Further,  since $A_i \subset A_j$, and $\mathcal{I^*}$ is an instance over only $A_i$, we have that $\MMS_j(\mathcal{I^*}) = \MMS_i(\mathcal{I})$. Thus,   $v_j(B''_j) \geq \MMS_i(\mathcal{I})$. 
    
    We can then transform Equation~\ref{eq:lam2} as follows: $$v_j(B_j) > v_j(B'_j) - \MMS_i(\mathcal{I}) + \MMS_i(\mathcal{I}),$$
    meaning it must be the case that $v_j(B_j) > v_j(B'_j)$. Because we know agent $j$ has identical valuations in $\mathcal{I}$ and $\mathcal{I'}$, and $\alloc'$ is an MMS allocation, we can conclude that agent~$j$ receives at least their maximin fair share in $\alloc$. 
\end{itemize}
\noindent Thus we have shown for any agent $j\in N$ that they receive their maximin fair share in the allocation~$\alloc$, meaning it must be an MMS allocation. Since~$\mathcal{I}$ was an arbitrary instance where exactly $k$ agents submit the approval set $M$, this concludes the proof. \qed
\end{proof}

We can now (finally) present the main result of this section. 

\begin{theorem}\label{thm:laminar}
For $n$ agents with cost utilities and laminar set approvals, there always exists an MMS allocation. 
\end{theorem}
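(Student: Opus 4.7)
The plan is a downward induction on $k$, the number of agents whose approval set equals $M$, using Lemma~\ref{lem:lam-ind} as the inductive step. The base case is $k = n$: every agent then has $v_i = c$, so the valuations are identical, and any $n$-partition $(S_1, \dots, S_n)$ that witnesses $\MMS^n_1(M)$ is simultaneously an MMS allocation for every agent. The inductive step is exactly Lemma~\ref{lem:lam-ind}, which passes the guarantee from instances with $k+1$ universal approvers to those with $k$. Iterating from $k = n-1$ down to $k = 1$ yields MMS existence in every instance where at least one agent's approval set equals $M$.

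The remaining case, $k = 0$, requires a small extension of Lemma~\ref{lem:lam-ind}. I would pick an agent $i \in N$ whose approval set $A_i$ is maximal under inclusion among $\{A_j\}_{j \in N}$---such an agent exists by laminarity---and rerun the lemma's proof verbatim. The hypothesis $k \geq 1$ is not used in an essential way: the only place where the set $N'$ of universal approvers enters is in Case~(iii) of the case analysis (the case $A_i \subsetneq A_j$ for $j \in N'$), which becomes vacuous when $N' = \emptyset$; Cases~(i) and~(ii) carry through unchanged, and the auxiliary instance $\mathcal{I}^*$ over the goods in $A_i$ still has at least one agent (namely agent~$i$ herself) whose approval in $\mathcal{I}^*$ covers every item, so an MMS allocation for $\mathcal{I}^*$ exists by the case $k = 1$ already established. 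Items approved by no agent can finally be assigned arbitrarily, since they contribute $0$ to every valuation and do not affect any agent's MMS.

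The main obstacle is precisely this boundary at $k = 0$: Lemma~\ref{lem:lam-ind} is stated for $k \geq 1$ and so halts the induction one step early, while the theorem claims existence for arbitrary laminar instances, including those in which no single agent approves every item. Once the observation that the lemma's proof extends to $k = 0$ is in hand, the downward induction closes the argument in a single sweep.
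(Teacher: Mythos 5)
Your proof is correct, and its core---the downward induction on the number $k$ of universal approvers via Lemma~\ref{lem:lam-ind}, anchored at the identical-valuations base case $k=n$---is exactly the paper's argument. The only divergence is how the $k=0$ boundary is closed. The paper does not extend the lemma; instead it observes that when no agent approves all of $M$, laminarity yields a partition of the agents into groups $N_1,\dots,N_k$ with pairwise disjoint item sets $M_1,\dots,M_k$ such that agents in $N_\ell$ approve only items of $M_\ell$ and some agent in $N_\ell$ approves all of $M_\ell$; each component is then an instance with a universal approver, and the componentwise MMS allocations concatenate into a global one (each agent's $n$-agent MMS over $M$ is witnessed inside their own component, since goods outside it are worthless to them). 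Your alternative---rerunning the lemma's proof with $N'=\emptyset$ and $i$ a maximal approver, so that $i^*=i$, case~(iii) is vacuous by maximality, and the auxiliary instance $\mathcal{I}^*$ over $A_i$ still has a universal approver---is also sound; it avoids introducing the decomposition at the cost of checking that the lemma's proof nowhere needs $k\geq 1$, which you have verified correctly. Either route completes the theorem.
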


\begin{proof}
First, note that given agents with laminar set approvals, if no agent has $\goods$ as their approval set, then we can find a $k$-partition $(N_1, \dots, N_k)$ of agents and pairwise disjoint subsets $\goods_1, \dots, \goods_k$ of items such that agents in $N_\ell$ do not approve any items in $\goods \setminus \goods_\ell$, and there is an agent $i \in N_\ell$ such that $A_i = M_\ell$. It is clear---because the agents are partitioned such that each partition considers a distinct set of items from $M$---that if we find an MMS allocation for each of the $k$ sub-cases, this gives us an MMS allocation in the global case. Therefore, without loss of generality, we assume for any instance that at least one agent submits $\goods$ as their approval set. 

Now suppose there are $n$ agents with cost utilities who all submit $\goods$ as their approval set. Then an MMS allocation trivially exists. Applying Lemma~\ref{lem:lam-ind} inductively, we see that for agents with cost utilities and laminar set approvals, an MMS allocation always exists given that at least one agent submits $M$ as their approval set.\qed \end{proof}

\begin{remark}
If an MMS allocation exists, then an MMS and PO allocation always exists since after each Pareto improvement agent's utility is weakly increasing. Thus, Theorem~\ref{thm:laminar} implies that under cost utilities and laminar set approvals, MMS and PO allocation always exist.
\end{remark}

\section{Strategyproof MMS Allocations}\label{sec:strat}
In this section, we study the strategic guarantees possible under cost utilities. We first show that for cost utilities, the \emph{Sequential Allocation} mechanism is strategyproof. Let us first define what we mean by strategyproofness. 

An allocation mechanism~$f$ is \emph{manipulable} if there is some agent~$i \in N$ such that $v_i(f(\boldsymbol{v}_{-i}, v'_i)_i) > v_i(f(\boldsymbol{v})_i)$, where $(\boldsymbol{v}_{-i}, v'_i)$ is the valuation that results when $v_i$ is replaced by $v'_i$. In other words, agent~$i$ can misrepresent their preferences by submitting an untruthful valuation~$v'_i$, thereby getting a more preferred outcome. We say $f$ is \emph{strategyproof} if it is not manipulable by any agent. 

We now define the \emph{Sequential Allocation mechanism} from previous studies \citep{knwxaaai13,knwijcai13}. We first define a \emph{picking sequence} as a sequence of agents in $N$. Note that the sequence of agents can be of any length, and any agent might appear multiple times in the sequence. We can think of Sequential Allocation as proceeding sequentially (as the name indicates), through the ordering of agents. At each step, the agent whose turn it is chooses the item with the highest cost that a) is still available and b) is in their approval set. Note that we ``force'' agents to pick their most wanted item, as reported in their approvals. If there are no remaining items that an agent finds useful then we skip this agent and continue with the next. The mechanism allows some items to remain unallocated only if they are not approved by any agent. 

In fact, Sequential Allocation is a family of mechanisms, each defined by the picking sequence. As we will see, the properties of the mechanism also heavily depend on the picking sequence in question. For example, it is well known that Sequential Allocation is not strategyproof in general unless an agent’s picks are all consecutive \citep{knwxaaai13}.

In the rest of this section, we will assume that the goods in $M = \{g_1, \dots, g_m\}$ are ordered from lowest cost to highest cost---i.e. $c(g_k) \leq c(g_\ell)$ for all $k < \ell$.

\begin{proposition}\label{prop:strat}
For agents with cost utilities, there exists a picking sequence such that Sequential Allocation is strategyproof and results in a Pareto efficient allocation.\footnote{We prove Proposition~\ref{prop:strat} for a picking sequence used in the proof of Proposition~\ref{prop:poss}, but note that there are simpler picking sequences for which it holds.} 
\end{proposition}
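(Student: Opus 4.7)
The plan is to exhibit a concrete picking sequence that is both strategyproof and Pareto efficient. Take $\sigma = (\underbrace{1,\ldots,1}_{m},\underbrace{2,\ldots,2}_{m},\ldots,\underbrace{n,\ldots,n}_{m})$, that is, agent~$1$ gets $m$ consecutive turns, then agent~$2$ gets $m$ consecutive turns, and so on. Since an agent with no remaining approved item is skipped, giving each agent $m$ turns ensures every agent has the opportunity to claim every available item they approve.

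The first step is to verify strategyproofness. Fix agent~$i$ and any reports of the other agents. Because $i$'s turns form a consecutive block following agents $1,\ldots,i-1$, the set $S$ of items remaining when $i$'s block begins depends only on the reports of those earlier agents, not on $i$'s report. With truthful report $A_i$, the mechanism forces $i$ to claim, turn by turn, the highest-cost item still in $S\cap A_i$, so at the end of the block $i$ receives exactly the $k^* := \min(m,|S\cap A_i|)$ highest-cost items of $S\cap A_i$, yielding utility $U^*$. Under any misreport $A'_i$, the set $R$ that $i$ ends up with is a subset of $S\cap A'_i$ of size at most $m$, and $i$'s true utility is $c(R\cap A_i)$. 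Since $R\cap A_i\subseteq S\cap A_i$ and $|R\cap A_i|\leq m$, the quantity $c(R\cap A_i)$ is bounded above by the sum of the top $\min(m,|S\cap A_i|)$ costs in $S\cap A_i$, which is exactly $U^*$. Hence no deviation by $i$ can strictly help, establishing strategyproofness.

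The second step is Pareto efficiency. Because the mechanism only lets an agent claim an item from their reported approval set, under truthful reporting every allocated item $g$ goes to an agent who values $g$ at $c(g)$. Moreover, the block structure guarantees that every item in $\bigcup_{j\in N}A_j$ is eventually taken: when it is agent~$i$'s turn and an approved item of theirs is still free, they are forced to claim something from $A_i$; giving $m$ turns suffices to exhaust $A_i$. Consequently the utilitarian welfare of the output equals $c\bigl(\bigcup_{j\in N}A_j\bigr)$, which is the maximum welfare achievable in any allocation (only approved items contribute, and each item contributes its cost at most once). Any Pareto improvement would strictly increase the welfare sum, which is impossible; therefore the allocation is Pareto efficient.

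The main obstacle is the strategyproofness argument: because the mechanism \emph{forces} the greedy pick within a block, it is tempting to worry that an agent could game the ordering by inflating or shrinking their reported approval set. The resolution, as sketched above, is that inflating the set can only waste turns on zero-valued items, while shrinking it can only discard high-cost items from consideration — so under the consecutive-block structure, truthful reporting is weakly dominant for each agent.
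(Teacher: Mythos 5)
Your proof is correct, but it establishes the proposition via a genuinely different picking sequence from the one the paper uses. You take the pure serial-dictatorship sequence (agent $1$ for $m$ consecutive turns, then agent $2$, and so on), which makes both properties nearly immediate: since each agent's turns form a consecutive block that starts only after all earlier agents have finished, the available set $S$ is independent of that agent's report, the agent ends up with $S\cap A'_i$ under report $A'_i$, and $c(S\cap A'_i\cap A_i)\leq c(S\cap A_i)$ gives strategyproofness; Pareto efficiency follows from your (correct) observation that the outcome maximises utilitarian welfare at $c(\bigcup_j A_j)$. The paper instead proves the result for the sequence $1,2,\dots,n-1,n,n,n$ followed by $m$ copies of $n,n-1,\dots,2,1$, which is more intricate to analyse but is deliberately chosen because the \emph{same} sequence is reused in Proposition~\ref{prop:poss} to additionally deliver an MMS allocation when $m=n+2$; the footnote to the proposition explicitly concedes that simpler sequences (such as yours) suffice for the statement at hand. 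Note that your sequence would \emph{not} work for Proposition~\ref{prop:poss} (with all agents approving everything, agent $1$ takes all items), so the two approaches trade simplicity here against reusability later. Your argument as written is sound and complete for Proposition~\ref{prop:strat} itself.
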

\begin{proof}

We define a sequence $S$ of agents of length $n+2$, and a sequence $T$ of agents where every agent appears exactly once. Let $S = 1, 2, \dots, n-1, n, n, n$, and $T = n, n-1, ..., 2, 1$. Our picking sequence is $S$, followed by $m$ copies of each element in the sequence $T$. We can think of this as running through $S$, then letting each agent in $T$ choose all the items they want when it is their turn in $T$. We now show that this gives us a strategyproof mechanism. 

It is immediately clear that agent~$n$ has no incentive to manipulate. They cannot move themselves up in the picking sequence, and once it is their turn, they can essentially grab all the items they want. 

For any other agent~$i \in N$, let $X_i$ be the items remaining immediately before agent $i$ received their first item, and let $x$ be the item with highest cost in $X_i \cap A_i$. Then, agent $i$ receives $x$. After this, all items in the approval sets of agents $i+1, i+2, ..., n$ are allocated before agent $i$ receives all remaining items in $A_i$. Thus, agent $i$ receives the bundle $x \cup (A_i \cap (X_i \setminus (A_{i+1} \cup A_{i+2} \cup ... \cup A_{n} )))$. Note that the preferences of agent $i$ do not decide the set $X_i$. Hence, by misreporting, agent $i$ is unable to gain any additional items that they approve. 

Note that the final allocation is Pareto optimal because items are only allocated to agents that want them. As agents have cost utilities, all agents who want an item will value it the same. This concludes the proof.\qed \end{proof}

We now consider whether there are picking sequences that can give us an MMS allocation along with truthfulness for a restricted number of items. Such a restriction is needed because computation of an agent's MMS value is NP-hard for an arbitrary number of items, which implies that no picking sequence is guaranteed to output an MMS allocation.\footnote{This is under the assumption  P$\neq$NP.} We start with a lemma that will be used to prove Proposition~\ref{prop:poss}.

\begin{lemma}\label{lem:strat}
For $n$ agents and $n+2$ goods, let $|A_i|\geq n+k$ where $k \in \{0,1,2\}$. The $(n-k)$-th most valuable item in $A_i$ is guaranteed to give agent~$i$ their maximin fair share. 
\end{lemma}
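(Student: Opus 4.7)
The plan is to upper-bound $\MMS_i$ by $c(g_{n-k}^i)$, where $g_1^i, g_2^i, \ldots$ denote the items of $A_i$ sorted in decreasing order of cost. It suffices to fix an arbitrary $n$-partition $\boldsymbol{B} = (B_1, \ldots, B_n)$ of the $n+2$ goods and show that $\min_{j \in N} v_i(B_j) \leq c(g_{n-k}^i)$; taking the maximum over such partitions then gives $\MMS_i \leq c(g_{n-k}^i)$, which is exactly the content of the lemma.

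First I would dispose of the easy case: if some bundle $B_j$ is disjoint from $A_i$, then $v_i(B_j) = 0 \leq c(g_{n-k}^i)$ by non-negativity of costs, and we are done. Otherwise every bundle contains at least one element of $A_i$, so $\sum_{j=1}^n |B_j \cap A_i| = |A_i|$ with each summand at least $1$. At the tight approval-size $|A_i| = n+k$ corresponding to the index $k$, the total excess above one $A_i$-item per bundle is exactly $k$, so at most $k$ bundles can contain two or more items from $A_i$. Consequently at least $n-k$ bundles each contain exactly one element of $A_i$.

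These $n-k$ singleton-$A_i$ bundles have values (to agent $i$) equal to the costs of $n-k$ distinct items of $A_i$. The minimum of any such collection of costs is at most $c(g_{n-k}^i)$, with equality attained precisely by selecting the top $n-k$ items of $A_i$. Hence $\min_j v_i(B_j) \leq c(g_{n-k}^i)$, as required.

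The main subtlety is the pigeonhole accounting in the second case: the argument actually produces at least $2n - |A_i|$ singleton-$A_i$ bundles, which equals $n-k$ precisely when $|A_i| = n+k$. Thus each of the three cases $k \in \{0,1,2\}$ is established tightly at its matching approval-size, and the lemma is to be invoked at that granularity in the proof of Proposition~\ref{prop:poss}.
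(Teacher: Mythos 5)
Your proof is correct and uses essentially the same pigeonhole argument as the paper: among $n$ bundles each meeting $A_i$, at least $n-k$ contain exactly one approved item when $|A_i|=n+k$, and the cheapest of $n-k$ distinct approved items costs at most the $(n-k)$-th most valuable one. You are in fact slightly more careful than the paper on two points: you handle the degenerate case of a bundle disjoint from $A_i$ (where the paper's claim that ``at least $n-k$ bundles are singletons'' can literally fail, though the conclusion survives since the minimum is $0$), and you correctly flag that the argument needs $|A_i| = n+k$ rather than $|A_i| \geq n+k$ --- indeed, as stated with ``$\geq$'' the lemma is false (take $n=3$, $k=0$, all five items approved with costs $2,2,1,1,1$: the third most valuable item costs $1$ but $\MMS_i = 2$), and the paper only ever invokes it with equality.
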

\begin{proof}
Note that for any $n$-partition of the items in $A_i$, there is at most $k$ bundles that are not singletons, meaning at least $(n-k)$ of the bundles have just a single item. Any of these bundles will give agent~$i$ their maximin fair share. Of these $(n-k)$ singleton bundles, the highest possible value for the lowest valued bundle is the cost of the $(n-k)$-th most valuable item in the agent's approval set. 
\end{proof}

\begin{restatable}{proposition}{possi}\label{prop:poss}
For $n$ agents with cost utilities, and $n+2$ goods, there exists a picking sequence such that Sequential Allocation is strategyproof, and returns a Pareto efficient MMS allocation. 
\end{restatable}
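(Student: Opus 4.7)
My plan is to build on the picking sequence from Proposition~\ref{prop:strat} but to allow it to depend on the (public) cost function, so that it matches the structure of whichever $n$-partition of $\goods$ achieves the maximin fair share in the hardest case. The key observation is that with only $n+2$ items, any MMS-optimal partition of $\goods$ must have the shape $(3, 1, \ldots, 1)$ or $(2, 2, 1, \ldots, 1)$, and I would use this structural dichotomy to pick between two candidate picking sequences.

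I would proceed as follows. First, sort the items so that $c(g_1) \leq c(g_2) \leq \cdots \leq c(g_{n+2})$ and compute the values $v_A = \min\{c(g_4),\, c(g_1) + c(g_2) + c(g_3)\}$ and $v_B = \min\{c(g_5),\, c(g_4) + c(g_1),\, c(g_3) + c(g_2)\}$ (omitting the $c(g_5)$ term when $n = 2$), which are the min-bundle values for the two candidate partitions of $\goods$. Then I would set the picking sequence to be $\pi_A = 1, 2, \ldots, n-1, n, n, n$ (the sequence from Proposition~\ref{prop:strat}) when $v_A \geq v_B$, and otherwise $\pi_B = 1, 2, \ldots, n-2, n-1, n, n, n-1$, in each case appending many copies of $T = n, n-1, \ldots, 1$ to ensure that any item approved by someone is eventually allocated. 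Strategyproofness and Pareto efficiency then follow from Proposition~\ref{prop:strat} applied to whichever sequence was chosen, since the choice depends only on public costs and not on the agents' reports, so no agent can manipulate the sequence itself. For the MMS guarantee I proceed agent by agent: if $|A_i| < n$ then $\MMS_i = 0$ and nothing is required, while if $|A_i| = n+k$ with $k \in \{0, 1, 2\}$, Lemma~\ref{lem:strat} gives a target item whose cost already bounds $\MMS_i$ from above, so an agent whose turn arrives at some position $j \leq n-k$ of the sequence attains MMS on their very first pick (by monotonicity, the worst-case item they can be forced to take has rank at most $j$ in $A_i$). The ``late'' agents---agent $n$ alone under $\pi_A$, or agents $n-1$ and $n$ under $\pi_B$---are handled directly by showing that the combined value of their picks matches the relevant term inside the minimum defining $v_A$ or $v_B$.

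The main obstacle is the all-$\goods$ case, where every agent has $A_i = \goods$ and therefore shares a common $\MMS$ equal to $\max(v_A, v_B)$. Here no single fixed sequence can work across all instances: if $v_B > v_A$ then the MMS partition must use two pair bundles, and running $\pi_A$ leaves agent $n-1$ with only the single item $g_4$, whose cost can be strictly below $v_B$; conversely, if $v_A > v_B$ then $\pi_B$ gives agent $n$ only two items whose sum can fall short of $c(g_1) + c(g_2) + c(g_3)$. Making the choice between $\pi_A$ and $\pi_B$ depend on the cost function is what resolves the conflict, and the central technical step is to verify that $\pi_A$ produces the triple bundle $\{g_1, g_2, g_3\}$ for agent $n$ and that $\pi_B$ produces the pair bundles $\{g_4, g_1\}$ and $\{g_3, g_2\}$ for agents $n-1, n$, so that the min-bundle values they attain equal $v_A$ and $v_B$ respectively; the remaining singleton bundles are easily bounded by $c(g_4) \geq v_A$ or $c(g_5) \geq v_B$.
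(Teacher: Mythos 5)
Your construction is essentially the paper's: the same two candidate sequences, a cost-based test to choose between them, Lemma~\ref{lem:strat} for the early agents, and a direct computation of the late agents' worst-case bundles. Your test $v_A \geq v_B$ is equivalent to the paper's test $c(g_4) > c(g_2)+c(g_3)$ up to ties (where either sequence works): $c(g_4) > c(g_2)+c(g_3)$ forces $v_B \leq c(g_2)+c(g_3) \leq v_A$, while $c(g_4) \leq c(g_2)+c(g_3)$ forces $v_A = c(g_4) \leq v_B$. The MMS part of your argument is sound.

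The gap is in your claim that strategyproofness of the second sequence $\pi_B = 1,\dots,n-1,n,n,n-1$ ``follows from Proposition~\ref{prop:strat}.'' That proposition is proved only for the sequence in which each agent's picks after the first round form a single consecutive block during the reverse pass; its argument rests on agent $i$'s final bundle being $\{x\}\cup\bigl(A_i\cap X_i\setminus(A_{i+1}\cup\dots\cup A_n)\bigr)$ with $X_i$ independent of $i$'s report. Under $\pi_B$, agent $n-1$ picks at position $n-1$ and again at position $n+2$ with agent $n$'s two picks in between, so what is available at agent $n-1$'s second pick \emph{does} depend on their first report --- exactly the non-consecutive-picks situation in which Sequential Allocation is not strategyproof in general, as the paper itself notes. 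A direct case analysis is needed: if agent $n-1$ misreports and takes an unapproved item, that pick is wasted; if they take an approved item $x''$ with $c(x'')\leq c(x)$ that agent $n$ would not have taken, their final pick is unchanged and they are no better off; and if $x''$ is one of the two items agent $n$ would have taken, agent $n$ simply takes something else and agent $n-1$ faces strictly fewer options at their last pick. This is the missing piece, which the paper supplies explicitly. (Your observation that the choice between $\pi_A$ and $\pi_B$ depends only on the public cost function, not on reports, is correct and necessary --- otherwise agents could manipulate the sequence itself --- but it does not substitute for the argument above.)
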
 

\begin{proof}
We first show that there is a picking sequence such that Sequential Allocation returns an MMS allocation. If an agent approves fewer than $n$ items, they still receive their maximin fair share even when no items are allocated to them. We therefore focus on agents who approve at least $n$ items. We define the picking sequence based on the cost of the items in $M$.  

\begin{itemize}
\item[$\blacktriangleright$] If $c(g_4) > c(g_2) + c(g_3)$, our picking sequence is $1, 2, \dots, n-1, n, n, n$. 
\item[$\blacktriangleright$] Otherwise, our picking sequence is $1, 2, \dots, n-1, n, n, n-1$. 
\end{itemize}

\noindent Note that these differ only in who gets to pick the last item. The fact that agents~$1$ through $n-2$ are guaranteed their maximin fair share for both picking sequences follows from Lemma~\ref{lem:strat}. It remains to show that the same holds for agent~$n-1$ and agent~$n$. If agent~$n-1$ or agent~$n$ approve at most $n$ items, then we already know they are guaranteed their maximin fair share. If agent~$n-1$ approves $n+1$ items, their $(n-1)$-th most valuable item is still up for grabs, and by Lemma~\ref{lem:strat} this will guarantee them their maximin fair share. 

We now consider what happens when agent~$n$ approves $n+k$ items---for $k \in \{1,2\}$), and when agent $(n-1)$ approves $n+2$ items. We look at each potential picking sequence separately.

\textit{Case 1:} Suppose $c(g_4) > c(g_2) + c(g_3)$.  If agent~$n$ approves $n+k$ items, they will receive at least $k+1$ items, as they pick last and can pick up to three items if they want, given the picking sequence $1, 2, \dots, n-1, n, n, n$. Clearly a bundle of size $k+1$ guarantees them their maximin fair share. 

What remains is to check what happens when agent $(n-1)$ approves all items in $M$, so suppose this to be the case. 
We first show that the maximin fair share of agent~$n-1$ is $\min(c(\{g_1, g_2, g_3\}), c(g_4))$.
Consider a partition $\alloc$ of $M$ into $n$ bundles, where $c(B_i) \geq \MMS^{n}_{n-1}(\goods)$ for each $i \in N$. 
At least $n-2$ of these bundles must contain a single item, and so we know that either \textit{i)} $n-2$ bundles contain one item and two bundles contain two items, or \textit{ii)} $n-1$ bundles contain one item and one bundle contains three items. We know that $c(g_4) > c(g_2) + c(g_3)$ by assumption, and the non-singleton bundles will be made up of the four lowest value items---$g_1, \dots, g_4$. Then the best we can do is one $3$-item bundle $B = \{g_1, g_2, g_3\}$ and all the remaining items in singleton bundles. It follows that the maximin fair share of agent~$n-1$ is $\min(c(B), c(g_4))$. When it is agent~$n-1$'s turn to pick, in the worst case, the only remaining goods will be $g_1, \dots, g_4$, in which case agent~$n-1$ can pick item $g_4$ to guarantee their maximin fair share.

\textit{Case 2:} Suppose instead that $c(g_4) \leq c(g_2) + c(g_3)$. If agent~$n$ approves $n+1$ items, they will receive two items, guaranteeing them their maximin fair share. If agent~$n$ approves all items in $M$, their maximin fair share in this case is determined by the lowest value bundle between the two bundles of size two, and the cheapest singleton. In particular, agent~$n$'s maximin fair share is $\min(c(\{c_1, c_4\}), c(\{c_2, c_3\}), c(\{g_5\}))$. With this picking sequence, agent~$n$ receives two items and in the worst case, this will be the bundle $B = \{g_2, g_3\}$. Clearly this guarantees agent~$n$ their maximin fair share. 

Finally, we look at when agent $(n-1)$ approves $n+2$ items. In this case, we know that their maximin fair share is determined by  $\min(c(\{g_1, g_4\}), c(\{g_2, g_3\}), c(\{g_5\}))$, as was the case for agent~$n$. As we did for agent~$n$ we know that agent~$(n-1)$ will receive two items, and in the worst case this will be the bundle $B = \{g_4, g_1\}$, which gives the agent their maximin fair share.

Strategyproofness and Pareto efficiency for the first case follows directly from Proposition~\ref{prop:strat}.
We now prove strategyproofness and Pareto efficiency for the second case, where $c(g_4) \leq c(g_2) + c(g_3)$. In this case, our picking sequence is $1, 2, \dots, n-1, n, n, n-1$. 

For any agent $i \in N$, if $i < n-1$, it is clear that there is no way for the agent to manipulate as they only get one pick. For agent~$n$, because their picks are right after each other, they also have no incentive to manipulate. Thus, we need only consider agent~$n-1$. 
Let $X$ be the items remaining immediately before agent $n-1$ received their first item, and let $x$ be the item with highest cost in $X \cap A_{n-1}$. Agent $n-1$ will pick $x$ by definition of the mechanism. Agent $n$ then receives their two highest valued remaining items if they exist (call these items $y$ and $y'$), and then finally agent~$n-1$ potentially receives the last item they approve (call this item $z$).

First, consider the case where agent~$n-1$ misreports that they approve some item $x'$, and they receive $x'$ instead of $x$. Then, the bundle of agent~$n-1$ will consist of $x'$ (which they value at 0), and potentially some other item $z'$ with $v_{n-1}(z') \leq v_{n-1}(x)$. Thus, agent~$n-1$ is not better off in this case.
Otherwise, if agent~$n-1$ instead misreports that they do not approve item $x$, then they will pick some other item $x''$ instead, where $c(x'') \leq c(x)$. If $x'' \neq y$ and $x'' \neq y'$, then we must have $v_{n-1}(x'') \leq v_{n-1}(z)$, and so agent~$n-1$ is not better off. Otherwise, if $x'' = y$ or $x'' = y'$, then agent~$n-1$ will have strictly fewer options for their final pick (compared to the case where they do not misreport), and so they are still not any better off.

Therefore, the mechanism is strategyproof. It is clear that no agent is assigned an item they do not want, and all items that are wanted by at least one agent are assigned to someone. Thus the allocation is Pareto efficient. \qed 
\end{proof}

We remark that Proposition~\ref{prop:poss} is tight in the sense that it no longer holds when there are $n$ agents and $n+3$ items.

\begin{restatable}{proposition}{last}\label{prop:counter} For agents with cost utilities, there exists an instance with $n=2$ agents and $m=5$ goods such that no strategyproof mechanism can guarantee a Pareto efficient MMS allocation. \end{restatable}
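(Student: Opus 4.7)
The plan is an impossibility proof by contradiction: I would fix an instance on five goods of unit cost and suppose some mechanism $f$ on 2-agent profiles over these goods is strategyproof and always returns a Pareto efficient MMS allocation. On the profile $P$ with $A_1=A_2=M$, we have $\MMS_1=\MMS_2=2$, and Pareto efficiency forces every item to be allocated, so $f(P)$ consists of two bundles of sizes $(2,3)$. After relabelling agents and items, I may assume $f(P)=(\{g_1,g_2\},\{g_3,g_4,g_5\})$.

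The core technique is to use strategyproofness in both directions to pin down $f$ on single-agent deviation profiles. The basic calculation, which gets applied repeatedly, is this: if agent~$1$ honestly reports $M$ in $P$ and deviates to $A'_1$, then one direction of SP upper-bounds the true value of her deviation bundle by $c(\{g_1,g_2\})=2$, while the other direction (truthful report $A'_1$, deviation to $M$) lower-bounds it by $c(\{g_1,g_2\}\cap A'_1)$. For instance, this forces agent~$1$'s bundle on $(\{g_1,g_2,g_3\},M)$ to be a $2$-element subset of $\{g_1,g_2,g_3\}$, forces her bundle on $(\{g_i\},M)$ for $i\in\{1,2\}$ to be exactly $\{g_i\}$, and forces $f((\{g_1,g_2\},M))=(\{g_1,g_2\},\{g_3,g_4,g_5\})$. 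Symmetric arguments on profiles of the form $(M,A'_2)$ yield analogous rigidity for agent~$2$'s bundle.

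Finally, I would derive the contradiction by exhibiting a profile at which two strategyproofness chains impose incompatible requirements. Concretely, I would look at profiles where both agents restrict to carefully chosen 3-item sets so that Pareto efficiency and the MMS lower bounds become very tight, and combine an SP chain routing agent~$1$'s report through a 3-element deviation with an SP chain routing agent~$2$'s through her own 3-element deviation; their combined constraints should force some item to be allocated against Pareto efficiency, or leave one agent strictly below her MMS value.

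The main obstacle is that the intermediate deviation profiles are not always fully pinned down by the simple two-direction SP calculation: the mechanism retains freedom to choose among several $2$-item subsets. Hence the proof becomes a case analysis on the mechanism's choices at these intermediate steps, and the critical engineering step is to select the final profile and the subsets $A'_1,A'_2$ so that every combination of choices is closed off by a further SP chain through either a single-item or disjoint-approval profile whose output has already been pinned down.
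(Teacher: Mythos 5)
There is a genuine gap, and it is fatal to the approach as stated: you chose an instance in which all five goods have \emph{unit} cost. Under cost utilities with a constant cost function, every agent's valuation is binary ($v_i(g)\in\{0,1\}$), and for binary additive valuations a strategyproof, Pareto efficient mechanism that guarantees every agent their maximin fair share is known to exist (this is exactly the positive result of Halpern et al.\ and Babaioff et al.\ that the paper cites in its related work: maximum Nash welfare with lexicographic tie-breaking is group strategyproof, Pareto optimal, and gives each agent at least $\lfloor |A_i|/n\rfloor = \MMS_i$). Consequently no chain of strategyproofness and efficiency constraints starting from your profile $P$ with $\MMS_1=\MMS_2=2$ can ever close into a contradiction --- the MNW mechanism satisfies all of them simultaneously. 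This also explains why your final step remains a promissory note (``their combined constraints \emph{should} force some item to be allocated against Pareto efficiency''): the contradiction you are hunting for does not exist in the unit-cost world.

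The impossibility genuinely needs heterogeneous costs. The paper's proof takes five goods of costs $2,3,4,5,6$ and exploits the arithmetic tension between the unique MMS split $\{4,6\}$ versus $\{2,3,5\}$ (value $10$ each), the forced bundle $\{3,6\}$ of value $9$ after agent~$1$ drops $g_2$, the singleton $\{6\}$ of value $6$, and the Pareto-forced bundle $\{2,3,6\}$ of value $11$; the chain of six single-agent deviations then pins agent~$1$'s utility at both exactly $10$ and at least $11$ on the same profile. Your overall proof architecture --- two-directional strategyproofness bounds along a path of single-agent deviations, with Pareto efficiency used to force privately-approved items onto their approver --- is the right one and matches the paper's; what is missing is the choice of a cost vector in which the MMS partition is essentially unique and the deviation bundles have strictly comparable, unequal values. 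You should replace the unit costs with such a vector and then actually exhibit the terminal profile where the two bounds collide.
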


\begin{proof}
Let $n = 2$, and $M = \{g_2, g_3, g_4, g_5, g_6\}$ such that $c(g_i) = i$. We will show that no allocation mechanism can satisfy strategyproofness while also guaranteeing a Pareto Efficient MMS allocation. Our aim is to start from an instance~$\mathcal{I}_1$ and---by repeatedly applying the three axioms---reach a contradiction.

First, consider the instance~$\mathcal{I}_1$, where both agents approve all items---this corresponds to the top row of Table~\ref{tab:mms}. Then, their maximin fair share is $10$, and the only way to reach an MMS allocation is to give $g_4$ and $g_6$ to one agent, and $g_2, g_3$ and $g_5$ to another. Suppose without loss of generality that $\{g_2, g_3, g_5\}$ is allocated to agent~$1$, and $\{g_4, g_6\}$ is allocated to agent~$2$. We will consider $5$ further instances. 

$\mathcal{I}_2$ differs only on agent~$2$'s approval set---they now only approve items $g_4$, $g_5$, and $g_6$. By strategyproofness, agent~$2$ must still receive a bundle she values at $10$. If this were a higher value the agent could manipulate from $\mathcal{I}_1$, and if it were lower, they could manipulate from $\mathcal{I}_2$ to $\mathcal{I}_1$. 

Instance~$\mathcal{I}_3$  differs from instance~$\mathcal{I}_2$ only on agent~$1$'s approval set---they now only approve items $g_3$, $g_4$, $g_5$, and $g_6$. As agent~$1$ is the only one approving item $g_3$, they must be allocated this item by Pareto efficiency. The maximin value of agent~$1$ in this instance is $9$, so they must receive one of the following bundles:  $\{(g_3, g_6\}$, $\{g_3, g_4, g_5\}$, $\{g_3, g_4, g_6\}$, $\{g_3, g_5, g_6\}$, or $\{g_3, g_4, g_5, g_6\}$. All but $\{g_3, g_6\}$ break strategyproofness, as agent~$1$ would have an incentive to manipulate from $\mathcal{I}_2$ to $\mathcal{I}_3$. 

Instance~$\mathcal{I}_4$ differs from instance~$\mathcal{I}_3$ only on agent~$1$'s approval set---they now only approve item $g_6$. Agent~$1$ must be allocated $g_6$. If this were not the case, they would have an incentive to manipulate from $\mathcal{I}_4$ to $\mathcal{I}_3$ as they do receive item $6$ in that instance. 

Instance~$\mathcal{I}_5$ differs from instance~$\mathcal{I}_4$ only on agent~$1$'s approval set---they now approve items $g_2$, $g_3$ and $g_6$.  As agent~$1$ is the only one approving items $g_2$ and $g_3$, they must be allocated these items by Pareto efficiency. If agent~$1$ is not also given $g_6$, they would have an incentive to manipulate from $\mathcal{I}_4$ to $\mathcal{I}_3$ as their bundle in that instance is valued at $6$ (which is greater than $2+3$, the value of the bundle $\{g_2, g_3\}$). Note that this gives them a bundle valued at $11$. 

Finally, instance~$\mathcal{I}_6$ differs from instance~$\mathcal{I}_5$ only on agent~$1$'s approval set---they now approve all items.  If agent~$1$ is given a bundle valued lower than $11$, they would have an incentive to manipulate from $\mathcal{I}_6$ to $\mathcal{I}_5$. Note however that $\mathcal{I}_6 = \mathcal{I}_2$, and our axioms dictated in that instance that agent~$1$ must receive utility of $10$. This gives us our contradiction. 
\begin{table}
\begin{center}
\begin{tabular}{ c | c | c  }
\textbf{Instance} & \textbf{Approval Sets} & \textbf{Allocation}  \\ 
 \hline
 $\mathcal{I}_1$ & $(23456) (23456)$ &  $(235) (46)$   \\  
 $\mathcal{I}_2$ & $(23456) (456)$ & $(235) (46)$     \\
 $\mathcal{I}_3$ & $ (3456) (456)$ & $(36) (45)$  \\
 $\mathcal{I}_4$ & $(6) (456)$ & $(6) (45)$ \\
 $\mathcal{I}_5$ & $ (236) (456)$ & $ (236) (45)$  \\
 $\mathcal{I}_6$ & $ (23456) (456) $ & (at least 11) (at most 9)
\end{tabular}
\end{center}
\caption{Table showing the approval sets corresponding to each instance in the proof of Proposition~\ref{prop:counter}. For example, $(23456) (456)$ denotes the instance where agent $1$ approves all items, and agent $2$ approves items $g_4$, $g_5$, and $g_6$.   The second column describes outcomes consistent with MMS, Pareto efficiency, and strategyproofness. Note that we omit items not approved by either agents, as they can be allocated to anyone without affecting any of the three axioms.}
\label{tab:mms}
\end{table}
\qed 
\end{proof}
\section{Conclusion}\label{sec:concl}

Fair division of indivisible resources is a challenging yet important problem with wide-ranging applications. In this paper, we have established that cost utilities are a useful restriction to study, especially in the context of MMS allocations. We have shown that there are several classes of instances where MMS allocations always exist under cost utilities. We also show that cost utilities are helpful in circumventing problems of strategic manipulation. 

The topic of MMS allocations in general, and for cost utilities in particular, poses many challenging questions. One might consider various fair division problems with constraints under cost utilities. A prime example is cardinality constraints---or more generally, budget constraints---which are quite natural in this setting. 

Our work serves as a further indication that fair division under cost utilities is a fruitful research direction. 

\subsubsection*{Acknowledgements.}  

This project was partially supported by the ARC Laureate Project FL200100204 on ``Trustworthy AI''.

\renewcommand\bibname{References}
\bibliographystyle{splncs04nat}
\bibliography{costutils}

\end{document}